  \providecommand\BibTeX{{%
    \normalfont B\kern-0.5em{\scshape i\kern-0.25em b}\kern-0.8em\TeX}}}
\newcommand{\ignore}[1]{}
\newcommand\ChangeRT[1]{\noalign{\hrule height #1}}
\newcommand{\ie}{\emph{i.e., }}
\newcommand{\etc}{\emph{etc}}
\newcommand{\etal}{\emph{et al. }}
\gdef\@copyrightpermission{
 \begin{minipage}{0.3\columnwidth}
  \href{https://creativecommons.org/licenses/by/4.0/}{\includegraphics[width=0.90\textwidth]{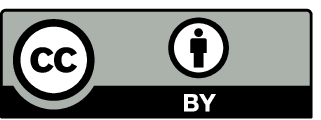}}
 \end{minipage}\hfill
 \begin{minipage}{0.7\columnwidth}
  \href{https://creativecommons.org/licenses/by/4.0/}{This work is licensed under a Creative Commons Attribution International 4.0 License.}
 \end{minipage}
 \vspace{5pt}
}
\begin{document}

\title{Model-based Unbiased Learning to Rank}

\author{Dan Luo}
\email{dal417@lehigh.edu}
\affiliation{%
  \institution{Lehigh University}
  \city{Bethlehem}
  \state{PA}
  \country{USA}
}

\author{Lixin Zou}
\email{zoulixin15@gmail.com}
\affiliation{%
  \institution{Baidu Inc.}
  \city{Beijing}
  \country{China}
}

\author{Qingyao Ai}
\email{aiqy@tsinghua.edu.cn}
\affiliation{%
  \institution{Tsinghua University}
  \city{Beijing}
  \country{China}
}

\author{Zhiyu Chen$^*$}
\email{zhiyuche@amazon.com}
\affiliation{%
  \institution{Amazon.com, Inc.}
  \city{Seattle}
  \state{WA}
  \country{USA}
}

\author{Dawei Yin$^{\dagger}$}
\email{yindawei@acm.org}
\affiliation{%
  \institution{Baidu Inc.}
  \city{Beijing}
  \country{China}
}

\author{Brian D. Davison}
\email{davison@cse.lehigh.edu}
\affiliation{%
  \institution{Lehigh University}
  \city{Bethlehem}
  \state{PA}
  \country{USA}
}

\renewcommand{\shortauthors}{Dan Luo et al.}

\thanks{
$^{\dagger}$ Corresponding author. \\
$^*$ The work was done prior to joining Amazon.
}


\begin{abstract}
Unbiased Learning to Rank~(ULTR), \ie learning to rank documents with biased user feedback data, is a well-known challenge in information retrieval. Existing methods in unbiased learning to rank typically rely on click modeling or inverse propensity weighting~(IPW). Unfortunately, search engines face the issue of a severe long-tail query distribution, which neither click modeling nor IPW handles well. Click modeling usually requires that the same query-document pair appears multiple times for reliable inference, which makes it fall short for tail queries; IPW suffers from high variance since it is highly sensitive to small propensity score values. Therefore, a general debiasing framework that works well under tail queries is sorely needed. To address this problem, we propose a model-based unbiased learning-to-rank framework. Specifically, we develop a general context-aware user simulator to generate pseudo clicks for unobserved ranked lists to train rankers, which addresses the data sparsity problem. In addition, considering the discrepancy between pseudo clicks and actual clicks, we take the observation of a ranked list as the treatment variable and further incorporate inverse propensity weighting with pseudo labels in a doubly robust way. The derived bias and variance indicate that the proposed model-based method is more robust than existing methods. Extensive experiments on benchmark datasets, including simulated datasets and real click logs, demonstrate that the proposed model-based method consistently outperforms state-of-the-art methods in various scenarios. 
The code is available at \url{https://github.com/rowedenny/MULTR}.

\end{abstract}

%
%
\begin{CCSXML}
<ccs2012>
<concept>
<concept_id>10002951.10003317.10003338.10003343</concept_id>
<concept_desc>Information systems~Learning to rank</concept_desc>
<concept_significance>500</concept_significance>
</concept>
</ccs2012>
\end{CCSXML}

\ccsdesc[500]{Information systems~Learning to rank}

%
\keywords{Unbiased Learning to Rank; Doubly Robust; User Simulator}

\maketitle

\section{Introduction}
Search engines serve as one of the most important tools for accessing information online. 
In modern search engines, learning to rank~(LTR) algorithms play a critical role by creating models to accurately order a list of candidate documents based on their relevance to the query. 
As deep supervised models have been widely applied and been state-of-the-art in many ranking tasks~\cite{DBLP:conf/sigir/DehghaniZSKC17, DBLP:conf/wsdm/ZamaniMSCT18, zou2022pre, zou2021pre, ye2022fast, chu2022h}, obtaining large-scale and high-quality training data has become a bottleneck for the development of large scale learning-to-rank systems~\cite{DBLP:conf/icml/2010ltr, zou2022large}. 
In practice, implicit feedback that reflects users' information needs~\cite{DBLP:journals/ftir/Sanderson10} provides natural, abundant sustainable training data for ranking optimization without costly time consumption and human annotation. 
Therefore, LTR with implicit feedback such as clicks has received considerable attention in the IR community. 


However, click data is biased since relevance is not the only factor influencing users’ clicks. 
For example, \textit{position bias} occurs because users are more likely to examine documents at higher ranks~\cite{DBLP:conf/wsdm/CraswellZTR08, DBLP:journals/tois/JoachimsGPHRG07}. Consequently, the highly ranked document may receive more clicks, and the relevant document may be perceived as a negative sample simply by not being examined by users. 
Furthermore, ranking positions~\cite{DBLP:conf/sigir/JoachimsGPHG05},  display differences~\cite{DBLP:conf/www/YuePR10, DBLP:conf/eccv/ShenZ14}, users’ first impression~\cite{DBLP:conf/www/OvaisiAZVZ20}, \etc., also influence the implicit feedback. These biases make the data deviate from reflecting true relevance, and jeopardize the learned ranking model's performance.


To be unaffected by biases~\cite{DBLP:conf/wsdm/JoachimsSS17},  there are two groups of methods: 
(1) \textbf{Click modeling methods} explicitly introduce an additional factor as bias, make hypotheses about users' browsing behaviors, and estimate true relevance by optimizing the likelihood of observed user clicks~\cite{DBLP:conf/www/ChapelleZ09, DBLP:conf/wsdm/CraswellZTR08, DBLP:conf/sigir/DupretP08, DBLP:conf/www/WangZDC13}. 
Click models are straightforward yet effective, and they have made promising progress for various applications, such as CTR prediction in live recommendation~\cite{DBLP:conf/recsys/GuoYLTZ19} and grid view web applications~\cite{DBLP:conf/www/ZhuangQWBQHC21}. 
However, search engines are faced with severe long-tail query distribution, where click models can fall short since multiple observations for the same query may not be available~\cite{DBLP:conf/sigir/AiBLGC18}. 
%
(2) \textbf{Propensity-based methods} treat the click bias as the counterfactual factor~\cite{rosenbaum1983central}, and re-weight the click data for a relevance-equivalent training loss through \textit{inverse propensity weighting} (IPW)~\cite{DBLP:conf/wsdm/JoachimsSS17,  DBLP:conf/sigir/WangBMN16}. 
To this end, many approaches~\cite{DBLP:conf/sigir/AiBLGC18,DBLP:conf/www/HuWPL19,DBLP:conf/cikm/VardasbiOR20} are proposed to jointly learn user bias models (i.e., propensity models) with unbiased rankers.
However, propensity-based methods generally suffer from high variance~\cite{DBLP:conf/sigir/Saito20}, which can lead to suboptimal estimation~\cite{DBLP:conf/nips/SwaminathanJ15}. 
Therefore, a general debiasing framework with low variance is needed.

\begin{figure}
    \centering
    \includegraphics[width=0.99\linewidth]{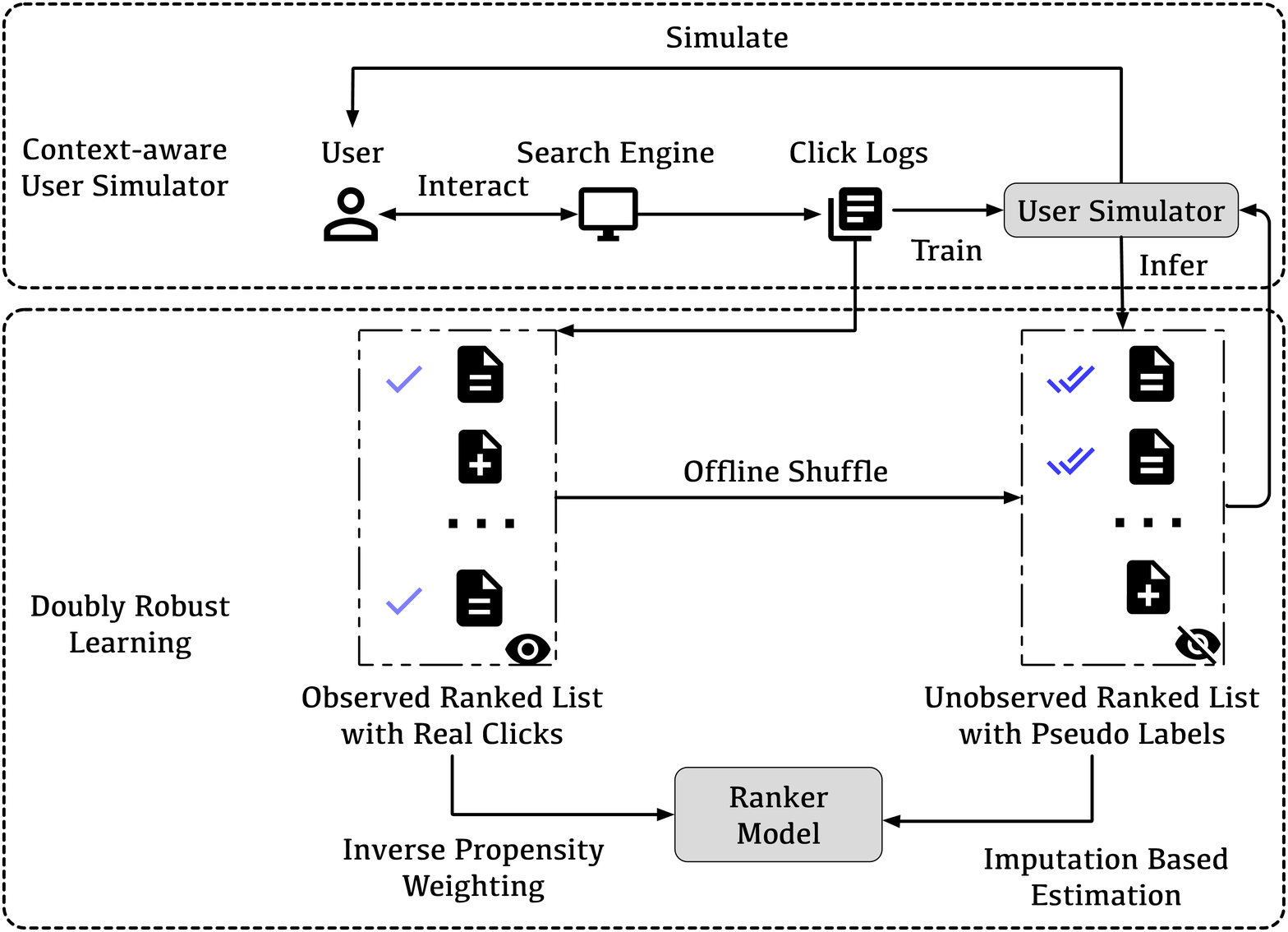}
    \caption{The framework of model-based unbiased learning to rank.}
    \label{fig:framework}
    \vspace{-.2in}
\end{figure}

To address these challenges, we propose a novel model-based unbiased learning to rank~(MULTR) framework, as shown in Figure~\ref{fig:framework}. 
MULTR consists of a context-aware user simulator for addressing the data sparsity problem and a doubly-robust learning algorithm for reducing the variance and achieving unbiased estimation.
Specifically, we first develop a user simulator that directly learns from click data, which generates pseudo clicks for unobserved ranked lists as data imputation and alleviates the sparse training data problem. Remarkably, the user simulator is unbiased since the factor, \ie the position, influencing the click is being considered as its input, which ensures the consistency between the training of user simulator and data augmentation.
However, as the user simulator cannot be as accurate as real users, the discrepancy between pseudo and actual clicks may mislead the ranking models, and unbiasedness cannot be guaranteed. 
To reduce the variance and ensure the unbiasedness, we resort to doubly robust approaches. 
In particular, we propose to take the \textit{observation of ranked lists} as the treatment, generate pseudo-click data 
via \textit{offline result randomization}, 
and incorporate the inverse propensity weighting with pseudo-click labels in a doubly robust way, where the user simulator essentially functions as the imputation model. 
The derived bias and variance indicate that the proposed model-based method has low bias and low variance. Thus, it is more robust than existing methods. To demonstrate the effectiveness of the proposed method, we conduct extensive experiments on simulated and real datasets, and demonstrate that the proposed model-based method consistently outperforms state-of-the-art methods.

The contributions of the proposed model-based unbiased learning to rank framework can be summarized as follows:
\begin{itemize}[leftmargin=*]
    \item We propose model-based unbiased learning to rank (MULTR) -- a general debiasing framework with low variance. 
    \item We devise a user simulator, which addresses data sparsity by generating pseudo-click labels for unobserved ranked lists.
    \item We propose the observation of rank lists as treatment, which addresses the treatment unobservability in unbiased learning to rank, and incorporate inverse propensity weighting with pseudo labels in a doubly robust way.
    \item We conduct extensive experiments on simulated datasets and real user click logs to demonstrate the superiority of MULTR. 
\end{itemize}



\section{Related Work}


\paragraph{Unbiased Learning to Rank}
To extract unbiased and reliable relevance signals from biased click signals, there are two streams of unbiased learning to rank methodologies. One school depends on click modeling, which makes assumptions about user browsing behaviors. Such methods maximize the likelihood of the observed data, model the examination probability and infer accurate
relevance feedback from user clicks~\cite{DBLP:conf/wsdm/CraswellZTR08, DBLP:conf/cikm/ChapelleMZG09, DBLP:conf/wsdm/ChuklinMR16, DBLP:MaoLZM18}. For example, \citet{DBLP:conf/recsys/GuoYLTZ19} propose a position-bias aware learning framework for CTR prediction that models position-bias in offline training and conducts online inference without biased information. \citet{DBLP:conf/wsdm/WangGBMN18} unify the training of the ranker model and the estimation of examination propensity with a graphical model and an EM algorithm. 
Despite their success, one major drawback of click models is that they usually require that the same query-document pair appears multiple times for reliable inference~\cite{DBLP:conf/ictir/MaoCLZM19}; thus they may fall short for tail queries.
The other school derives from counterfactual learning, which treats bias as a counterfactual factor and debiases user clicks via  inverse propensity weighting~\cite{DBLP:conf/wsdm/JoachimsSS17, DBLP:conf/sigir/WangBMN16}. Recent efforts jointly model the propensity estimation and unbiased learning to rank. For instance, \citet{DBLP:conf/sigir/AiBLGC18} and \citet{DBLP:conf/www/HuWPL19} present dual learning frameworks for estimating bias and training a ranking model. \citet{DBLP:conf/sigir/VardasbiRM20} propose cascade model-based inverse propensity scoring for propensity estimation in the cascade scenario. 
However, these methods ignore the severe high variance problem in IPW-based methods, particularly for long-tail data.
Our paper falls in the same propensity-based framework; however, we address the high variance problem by optimizing the ranking model in a doubly robust way, which has low bias and low variance.

\paragraph{Model-based Methods}
Model-based methods aim to construct a predictive user model and ask the question of the counterfactual form ``what will the user click if the search result page is presented differently?'', which naturally fits debiasing. Recent work has proposed some methods that improve ranking performance based on user simulators. For example, ~\citet{DBLP:conf/cikm/DaiHLXT0H0020} propose a utility estimator to generate counterfactual data to train the ranking model. ~\citet{DBLP:conf/cikm/ZhangMLZ0MXT19} develop a reinforcement learning algorithm to learn ranking policies in the simulation environments. While both methods show promising improvements, the theoretical guarantees with respect to unbiasedness are unclear. Different from such work, we leverage the user simulator as the imputation model in a doubly-robust way, which addresses the high bias issue caused by the discrepancy between user simulator and real clicks. More importantly, our derivation demonstrates the unbiasedness of the proposal. Our ablation study verifies that the doubly robust approach can effectively leverage both real and pseudo clicks, and improve the ranking performance.

\paragraph{Doubly-Robust Methods}
Doubly-Robust (DR) methods have been widely applied to position-biased clicks. \citet{DBLP:conf/recsys/Saito20} proposes a DR method for post-click conversions. ~\citet{DBLP:conf/sigir/GuoZLYCWCY021} further develop a more doubly robust estimator to reduce the variance. \citet{DBLP:conf/wsdm/KiyoharaSMNSY22} develop a cascade doubly robust estimator for off-policy evaluation, \citet{DBLP:conf/cikm/YuanHYZCDL19} introduce a DR estimator for click-through-rate prediction. \citet{DBLP:conf/cikm/ZouHCWCCYGY22} propose a doubly robust estimator for relevance estimation. A significant difference between current DR estimators and those in ULTR is that existing solutions use corrections based on action propensities, which is similar to generic counterfactual estimation; while ULTR needs examination propensities~\cite{oosterhuis2022doubly}, which unfortunately is unobservable in click logs.
In this paper, we address this challenge by taking the observation of rank lists as treatment, and incorporate the inverse propensity weighting with pseudo-click labels in a doubly-robust way.
\section{Preliminaries}

This section presents the task of unbiased learning to rank and reviews the preliminary methods of click modeling and propensity-based methods with their strengths and weaknesses.

\subsection{Task Formulation}
Let $\mathcal{D}$ be the universal set of documents, and $\mathcal{Q}$ be the universal set of queries. For a user-issued query $q \in \mathcal{Q}$, we use $r_{d}$ to denote the relevance annotation over the query $q$ and document $d \in \mathcal{D}$. The goal of learning to rank is to find a mapping function from a query document pair $(q, d)$ to its relevance $r_d$ as $f:\mathcal{Q}\times\mathcal{D}\rightarrow\mathbb{R}$. 
A corresponding local loss function is usually proposed to learn the best $f$, given its retrieved ranked list $\pi_q$ as,  
\begin{equation}
    \label{eq:ideal_loss}
    \ell_{ideal}(f,q|\pi_q) = \sum_{d\in\mathcal{D}}\Delta(f(q,d),r_d|\pi_q),
\end{equation}
where $\Delta$ is a function that computes the individual loss on each document. $\ell_{ideal}(f, q)$ is the ideal local ranking loss for optimizing the ranking function $f$ with all the documents annotated. Without loss of generality, we simplify the individual loss function $\Delta(f(q,d), r_{d} | \pi_q)$ as  $\Delta(r_d|\pi_q)$. 

Typically, the relevance annotation $r_d$ is elicited through expert judgment; thus $r_d$ is considered to be unbiased, but expensive. An alternative approach is to use click data as relevance feedback from users. Suppose there is a click dataset in which the clicks on documents with respect to queries by an initial ranking model are logged. If we conduct learning to rank by replacing the relevance label $r_d$ with click label $c_d$ in Equation~\ref{eq:ideal_loss}, then the empirical local ranking loss is derived as follows, 
\begin{eqnarray}
    \label{eq:loss_naive}
    \ell_{naive}(f,q|\pi_o) = \sum_{d \in \pi_o, c_{d}=1} \Delta(c_{d}|\pi_o),
\end{eqnarray}
where $\pi_o$ is the \textit{observed} ranked list, which is the ranked list presented to users, and $c_{d}$ is a binary variable indicating whether the document $d$ in the ranked list $\pi_o$ is clicked.
However, this naive loss function is biased. For instance, position bias occurs because users are more likely to examine the documents at higher ranks~\cite{DBLP:conf/wsdm/JoachimsSS17}. Consequently, highly ranked documents may receive more clicks, and relevant (but unclicked) documents may be perceived as negative samples because they are unexamined by users. To address this issue, unbiased learning-to-rank aims to eliminate bias in click data and then train a ranking model with the resulting user clicks. 

\subsection{Click Modeling}
To eliminate biases in click data, one intuitive method is to model bias and relevance as independent factors, and represent the joint effects of bias and relevance with bias-aware click modeling~\cite{DBLP:conf/recsys/GuoYLTZ19, DBLP:conf/sigir/LiuCDHP020}. Here we refer to the general approach as click modeling. Let $\mathbf{b}$ be the bias features, and $\mathbf{x}_d$ be the vector representation of document $d$, then the bias-aware click predictor is modeled as follows:
\begin{equation}
    \hat{P}(c_d=1, q) = h(\mathbf{b}; \phi) \oplus f(\mathbf{x}_d; \theta),
\end{equation}
where $\hat{P}(c_d=1, q)$ is the estimated click probability of document $d$ of query $q$. $h$ is the bias model parameterized by $\phi$ that generates a score based on the inputs of bias features. $f$ is the relevance prediction model parameterized by $\theta$ that generates a relevance score based on the relevance representations. $\oplus$ is an operation that combines the bias-based score and the relevance score, which could be addition~\cite{DBLP:conf/www/ZhuangQWBQHC21}, multiplication~\cite{DBLP:conf/recsys/GuoYLTZ19}, \etc. The general local ranking loss of a bias-aware click predictor is defined as:
\begin{equation}
    \begin{aligned}
        \ell_{CLICK} (f, h, q | \pi_o) 
        &=  \sum_{d \in \pi_o, c_d=1} \mathrm{CE} (\hat{P}(c_d=1, q), c_d) \\
        &= \sum_{d \in \pi_o, c_d=1} \mathrm{CE} \left( h(\mathbf{b}; \phi) \oplus f(\mathbf{x}_d; \theta), c_d \right),
    \end{aligned}
\end{equation}%
where $c_d$ is a binary variable indicating whether document $d$ is clicked, and $\mathrm{CE}  (\cdot, \cdot)$ the cross-entropy function. When conducting the relevance inference, one can easily eliminate the influence of bias by dropping the scores from $h(\mathbf{b}; \phi)$, \ie directly ranking with $f(\mathbf{x}_d, \theta)$. Though click models work well with head queries, they could fall short when multiple observations of the same query may not be available~\cite{DBLP:conf/sigir/AiBLGC18}.

\subsection{Propensity-based Methods}
Inverse propensity weighting (IPW) is the first unbiased learning to rank algorithm proposed under the propensity-based framework~\cite{DBLP:conf/sigir/WangBMN16, DBLP:conf/wsdm/JoachimsSS17}.  Let $e_d, c_d$ be the binary variables that represent whether document $d$ is examined and clicked by a user, based on the \textit{Examination Hypothesis} \cite{DBLP:conf/www/RichardsonDR07} that a user would only click a document when it is observed by the user and considered relevant to the user’s need, we have 
\begin{equation}
    \label{eq:assumption}
    c_{d}= 1 \Longleftrightarrow (e_{d}= 1 \ \ \text{and} \ \ r_{d}=1).
\end{equation} %
Then, IPW instantiates the local ranking loss as
\begin{equation}
    \label{eq:ipw_loss}
    \ell_{IPW} (f, q|\pi_o) = \sum_{d \in \pi_o, c_{d}=1} \frac{\Delta (c_{d}|\pi_o)}{\hat{P}(e_{d} = 1)},
\end{equation}%
where $\hat{P}(e_{d} = 1)$ is the estimated probability that document $d$ is examined in the query session.  A nice property is that only clicked documents $c_d = 1$ contribute to the estimation in Eq.~\ref{eq:ipw_loss}.

\paragraph{Bias and Variance Analysis}  
We analyze the bias and variance of the inverse propensity weighting approach. 
\begin{theorem}
    \label{th:ips_bias}
    The bias of the IPW estimators is
    \begin{equation}
        \begin{aligned}
            {Bias}[\ell_{IPW} (f, q|\pi_o)] = \left| \sum_{d \in \pi_o, r_{d}=1} \frac{P(e_{d}=1) - \hat{P}(e_{d}=1)}{\hat{P}(e_{d} = 1)} \Delta (c_{d}|\pi_o) \right|
        \end{aligned}
    \end{equation}
\end{theorem}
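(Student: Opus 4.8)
The plan is to identify the estimand and then compute the expectation of $\ell_{IPW}$ over the randomness of the clicks. I would take the target of the estimator to be the ideal local loss $\ell_{ideal}(f,q\mid\pi_o)$ of Eq.~\ref{eq:ideal_loss} evaluated on the observed list $\pi_o$, where an irrelevant document contributes $\Delta(0\mid\pi_o)=0$; hence $\ell_{ideal}(f,q\mid\pi_o)=\sum_{d\in\pi_o,\,r_d=1}\Delta(c_d\mid\pi_o)$ (on a relevant document the ``positive'' label is the one being scored, so writing the summand with $c_d$ rather than $r_d$ is harmless). First I would rewrite the IPW loss of Eq.~\ref{eq:ipw_loss} as a sum over \emph{all} documents of $\pi_o$ carrying the click indicator, $\ell_{IPW}(f,q\mid\pi_o)=\sum_{d\in\pi_o}\frac{\mathbbm{1}[c_d=1]\,\Delta(c_d\mid\pi_o)}{\hat{P}(e_d=1)}$, so that linearity of expectation applies termwise and the propensity denominators, which do not depend on the clicks, come out of the expectation.

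Next I would evaluate $\mathbb{E}_c\bigl[\mathbbm{1}[c_d=1]\,\Delta(c_d\mid\pi_o)\bigr]$. On the event $c_d=1$ the summand equals the constant $\Delta(1\mid\pi_o)$, so this expectation is $P(c_d=1)\,\Delta(1\mid\pi_o)$. Invoking the Examination Hypothesis (Eq.~\ref{eq:assumption}), a document is clicked exactly when it is both examined and relevant; treating the relevance label $r_d$ as given, this yields $P(c_d=1)=P(e_d=1)$ if $r_d=1$ and $P(c_d=1)=0$ otherwise. Substituting back, every irrelevant document drops out and
\[
\mathbb{E}_c[\ell_{IPW}(f,q\mid\pi_o)]=\sum_{d\in\pi_o,\,r_d=1}\frac{P(e_d=1)}{\hat{P}(e_d=1)}\,\Delta(c_d\mid\pi_o).
\]

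Finally I would subtract the estimand and take absolute values. Since the index sets now coincide,
\[
\mathrm{Bias}[\ell_{IPW}(f,q\mid\pi_o)]=\Bigl|\sum_{d\in\pi_o,\,r_d=1}\Bigl(\tfrac{P(e_d=1)}{\hat{P}(e_d=1)}-1\Bigr)\Delta(c_d\mid\pi_o)\Bigr|=\Bigl|\sum_{d\in\pi_o,\,r_d=1}\tfrac{P(e_d=1)-\hat{P}(e_d=1)}{\hat{P}(e_d=1)}\,\Delta(c_d\mid\pi_o)\Bigr|,
\]
which is the claimed expression; in particular the bias vanishes when $\hat{P}(e_d=1)=P(e_d=1)$ for every relevant document.

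I expect the main obstacle to be conceptual rather than computational: one must be careful about the estimand, namely that the correct comparison point is the ideal local loss \emph{restricted to} $\pi_o$ (not over all of $\mathcal{D}$) and that $\Delta(\cdot\mid\pi_o)$ evaluated at $c_d$ on clicked documents agrees with the value it takes on relevant documents, which is exactly the content of the earlier simplification $\Delta(f(q,d),r_d\mid\pi_q)\mapsto\Delta(r_d\mid\pi_q)$. A secondary point is the treatment of relevance: if $r_d$ is modeled stochastically rather than deterministically, the identical argument goes through with $P(c_d=1)=P(e_d=1)\,P(r_d=1)$ and the final sum read as an expectation over relevance as well.
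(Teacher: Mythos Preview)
Your argument is correct and in fact supplies more than the paper does: the paper's ``proof'' of this theorem consists entirely of a pointer to \citet{DBLP:conf/wsdm/JoachimsSS17} and \citet{DBLP:conf/sigir/AiBLGC18} and gives no computation at all. Your explicit derivation---rewrite the IPW sum with a click indicator, take expectations termwise, invoke the Examination Hypothesis to replace $P(c_d=1)$ by $P(e_d=1)\mathbbm{1}[r_d=1]$, and subtract the ideal loss restricted to $\pi_o$---is exactly the standard calculation underlying those cited results, so there is no methodological divergence to speak of. Your closing caveats about the estimand being the ideal loss on $\pi_o$ (not all of $\mathcal{D}$) and about the convention $\Delta(0\mid\pi_o)=0$ are well placed; the paper silently adopts both conventions elsewhere (see the derivation of Theorem~\ref{th: dr_bias}).
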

\begin{proof} 
~\citet{DBLP:conf/wsdm/JoachimsSS17} proved that a ranking model trained with clicks and IPW loss will converge to the same model trained with true relevance labels; see ~\citet{DBLP:conf/sigir/AiBLGC18} for more detail.
\end{proof}
As shown in Theorem~\ref{th:ips_bias},  when the estimated examination probability is accurate, \ie $P(e_{d_i}=1) = \hat{P}(e_{d_i}=1)$, then $Bias[\ell_{IPW}] = 0$, indicating accurate estimation can be achieved with rich observation data. Then, we derive the variance of the IPW estimators.

%
\begin{theorem} The variance of the IPW estimator is
    \label{th:ips_var}
    \begin{equation}
        \small{
        \begin{aligned}
            \mathbb{V}_{\mathcal{O}} [\ell_{IPW} (f, q|\pi_o)] = \sum_{d \in \pi_o, c_d=1} \frac{P(e_{d} = 1)(1 - P(e_{d} = 1))}{\hat{P}(e_{d} = 1)^2} \Delta (c_{d}|\pi_o)^2.
        \end{aligned}}
    \end{equation}
\end{theorem}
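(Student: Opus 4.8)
The plan is to rewrite the IPW loss as a linear combination of independent Bernoulli examination indicators and then apply the elementary formula for the variance of a sum of independent random variables, exactly as in the bias computation referenced in Theorem~\ref{th:ips_bias}.

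First I would invoke the Examination Hypothesis in Eq.~\ref{eq:assumption}: a document with $r_d = 0$ is never clicked, so it contributes nothing, whereas for a document with $r_d = 1$ the click indicator coincides with the examination indicator, $c_d = e_d \sim \mathrm{Bernoulli}(P(e_d=1))$. Hence the summation over clicked documents in Eq.~\ref{eq:ipw_loss} can be rewritten as a summation over the relevant documents of $\pi_o$, each term carrying a factor $e_d$:
\begin{equation}
    \ell_{IPW}(f, q | \pi_o) = \sum_{d \in \pi_o, r_d = 1} \frac{e_d}{\hat{P}(e_d = 1)}\, \Delta(c_d | \pi_o).
\end{equation}
The key observation is that whenever such a term is nonzero we have $c_d = 1$, so $\Delta(c_d|\pi_o)$ is a fixed quantity determined by $d$ and $\pi_o$ (not a random variable), and $\hat{P}(e_d=1)$ is deterministic as well.

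Next, treating the examination indicators $\{e_d\}_{d \in \pi_o}$ as the randomness source $\mathcal{O}$ and assuming, as is standard in the analysis of IPW, that they are mutually independent across documents, the loss has the form $\sum_d a_d e_d$ with deterministic coefficients $a_d = \Delta(c_d|\pi_o)/\hat{P}(e_d = 1)$. Independence then gives
\begin{equation}
    \mathbb{V}_{\mathcal{O}}[\ell_{IPW}(f,q|\pi_o)] = \sum_{d \in \pi_o, r_d = 1} \frac{\Delta(c_d|\pi_o)^2}{\hat{P}(e_d = 1)^2}\, \mathbb{V}_{\mathcal{O}}[e_d],
\end{equation}
and substituting $\mathbb{V}_{\mathcal{O}}[e_d] = P(e_d=1)\bigl(1 - P(e_d=1)\bigr)$ for the Bernoulli variable, together with the identification of the relevant documents with the clicked ones in the index set, yields the claimed expression.

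The main obstacle is not computational but notational: one must be careful to argue that $\Delta(c_d|\pi_o)$ is a constant (it is evaluated only at $c_d=1$), that the index set ``$d \in \pi_o, c_d = 1$'' in the statement is really shorthand for the relevant documents, and that the cross terms vanish. The last point rests on the independence of the $e_d$'s, which is an assumption rather than a derived fact, so I would state it explicitly; once it is granted, the remainder is the routine variance-of-an-independent-sum calculation and no further difficulty arises.
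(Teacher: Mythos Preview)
Your proposal is correct and follows essentially the same route as the paper: rewrite the IPW loss as a sum of Bernoulli examination indicators with deterministic coefficients, push the variance through the sum via independence, and substitute $\mathbb{V}_{\mathcal{O}}[e_d]=P(e_d=1)(1-P(e_d=1))$. If anything, you are more explicit than the paper about the independence assumption and about why the index set ``$c_d=1$'' and the factor $\Delta(c_d|\pi_o)$ may be treated as deterministic, which the paper leaves implicit.
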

\begin{proof}
For the IPW estimator, its variance on the observed ranking documents is:
    \begin{equation}
    \label{eq:ips_var}
        \small{
        \begin{aligned}
            \mathbb{V}_{\mathcal{O}} [\ell_{IPW} (f, q|\pi_o)] 
            =& \mathbb{V}_{\mathcal{O}} \left [ \sum_{d \in \pi_o, c_{d}=1} \frac{\Delta (c_{d}|\pi_o)}{\hat{P}(e_{d} = 1)} \right] \\
            =&  \sum_{d \in \pi_o, c_{d}=1} \mathbb{V}_{\mathcal{O}} \left [ \frac{\Delta (c_{d} | \pi_o)}{\hat{P}(e_{d} = 1)} \right] \\
            =&  \sum_{d \in \pi_o, c_{d}=1} \mathbb{V}_{\mathcal{O}} [e_{d}] \cdot  \left( \frac{\Delta (c_{d} |\pi_o)}{\hat{P}(e_{d} = 1)}  \right)^2 \\
            =& \sum_{d \in \pi_o, c_{d}=1} \left( \mathbb{E}_{\mathcal{O}} [e^2_{d}]  - \mathbb{E}^2_{\mathcal{O}} [e_{d}] \right)
            \cdot  \left( \frac{\Delta (c_{d} | \pi_o)}{\hat{P}(e_{d} = 1)}  \right)^2 \\
            =& \sum_{d \in \pi_o, c_d=1} \frac{P(e_{d} = 1)(1 - P(e_{d} = 1))}{\hat{P}(e_{d} = 1)^2}
            \cdot   \Delta (c_{d}|\pi_o)^2.       
        \end{aligned}}
    \end{equation}
 \end{proof}
Theorem~\ref{th:ips_var} illustrates that the variance of IPW estimators depends on the estimated propensity. When $\hat{P}(e_{d}=1)$ is small, it may lead to high variance. Especially for  long-tail queries with rare observation data, the high variance of IPW becomes a problem that directly influences the effectiveness of the ranker.




\section{Model-based Unbiased Learning to Rank}
In this section, we first introduce the context-aware user simulator, which generates pseudo-click labels for unobserved ranked lists as data imputation and addresses the data sparsity.  
Afterwards, we propose a doubly robust estimator, which takes the observation of ranked lists as the treatment, and further incorporates inverse propensity weighting with pseudo labels from the user simulator above in a doubly robust way.
Lastly, we derive the bias and variance of the proposed method, and demonstrate that our proposed method is more robust than existing methods.

\subsection{Context-aware User Simulator}
The user simulator consists of two important components: a local context encoder, which captures different feature distributions from different queries, and a context-aware click decoder, which produces the click probability of each document sequentially. The framework of our user simulator is shown in Fig.~\ref{fig:user_simulator}.

\begin{figure}
    \centering
    \includegraphics[width=3.3in]{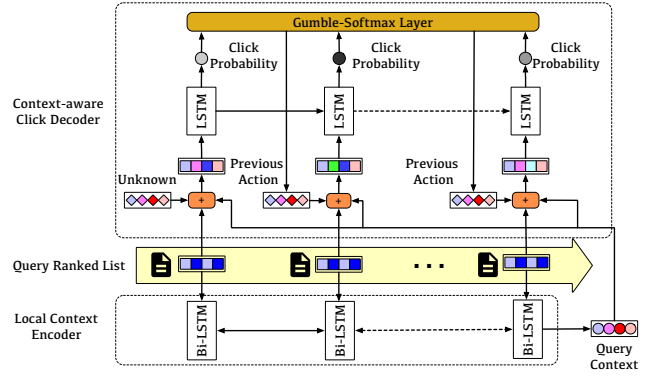}
    \vspace{-10pt}
    \caption{Framework of the Context-aware User Simulator. }
    \label{fig:user_simulator}
    \vspace{-10pt}
\end{figure}

\paragraph{Local Context Encoder}
To capture the characteristics of queries, \ie different queries may have different distributions in the feature space~\cite{DBLP:conf/sigir/AiBGC18}, we first design a local context encoder for each query and use it to refine the query-specific features, as shown at the bottom block in Figure~\ref{fig:user_simulator}. Formally, given the top $N$ documents $\{d_i\}_{i=1}^N$ in a ranked list $\pi_q$ from top to bottom, and their corresponding feature vectors $\{ \mathbf{x}_i \}_{i=1}^N$, we use a bi-directional long short-term memory~\cite{DBLP:journals/neco/HochreiterS97} network to obtain a contextual representation of each document with respect to the entire ranked list:
\begin{equation}
    \begin{aligned}
         \mathbf{h}_i &= [\overrightarrow{\text{LSTM}} (\mathbf{x}_i, i); \overleftarrow{\text{LSTM}} (\mathbf{x}_i, i)] \\
         \mathbf{h}_{\pi_q} &= \mathbf{W}^\top \mathbf{h}_N,
    \end{aligned}
\end{equation}
where $[\cdot; \cdot]$ is concatenation, and $\overrightarrow{\text{LSTM}} (\mathbf{x}_i, i)$ processes the document $x$  from top to bottom and returns the LSTM hidden state at position $i$ (and vice versa for the backward direction $\overleftarrow{\text{LSTM}}$). We then produce the local context representation $\mathbf{h}_{\pi_q}$ by a linear transformation of the last hidden state $\mathbf{h}_N$. 


\paragraph{Context-aware Click Decoder}

To characterize different influences from the previous actions, we encode actions with an embedding matrix $\mathbf{A}$. In our settings, there are three types of actions, \ie $\textsf{click}$, $\textsf{skip}$, and $\textsf{unknown}$, where $\textsf{unknown}$ is designed for the initial step. We use  $\mathbf{h}_{a_t}$ to denote the action embedding corresponding to action $a_t$ at step $t$. The context-aware click decoder is formulated as:
\begin{equation}
    \label{eq:click_decoder}
    \begin{split}
        \mathbf{z}_{\pi} &= \mathbf{W}_1^\top \mathbf{h}_{\pi_q} + \mathbf{b}_1 \\
        \mathbf{z}_{x_i} &= \mathbf{W}_2^\top \mathbf{x}_i + \mathbf{b}_2 \\
        \mathbf{z}_{a_t} &= \mathbf{W}_3^\top \mathbf{h}_{a_t} + \mathbf{b}_3 \\
        \mathbf{z}_t &= [\mathbf{z}_{\pi}; \mathbf{z}_{x_i}; \mathbf{z}_{a_t}] \\
        \mathbf{h}_t &= \overrightarrow{\text{LSTM}} (\mathbf{z}_t, t), \\
    \end{split}
\end{equation}%
where $\mathbf{W}_1, \mathbf{W}_2, \mathbf{W}_3, \mathbf{b}_1, \mathbf{b}_2, \mathbf{b}_3$ are the trainable parameters used to transform the query context vector, the document vector and the previous action embedding, respectively.  

At each step $t$, the hidden state $\mathbf{h}_t$ is projected into a conditional click probability score through the sigmoid function:
\begin{equation}
    \label{eq:click_prob}
    p_{t} = \mathrm{sigmoid} (\mathbf{W}_4^\top \mathbf{h}_t + b_4),
\end{equation}%
where $\mathbf{W}_4, \mathbf{b}_4$ are trainable parameters, and $p_{t} \in [0, 1]$ represents the probability that a user clicks document $d_t$.


\paragraph{User Simulator Optimization}
After obtaining the click probability score $p_t$ for each document, we train the user simulator by applying the binary cross-entropy:
\begin{equation}
    \label{eq: user_simulator_loss}
    \ell_{e}(g, q |\pi_o) = \sum_{t =1}^N \left( - c_{t} \log p_{t} - (1 - c_{t}) \log(1 - p_{t}) \right) + \lambda ||\phi||^2, 
\end{equation}
where $g$ is the whole context-aware user simulator, $c_{t}$ denotes the click label of document $d_t$ in observed ranked list $\pi_o$,  $\phi$ denotes all the parameters of the user simulator $g$, and $\lambda$ denotes the $L_2$ regularization coefficient. 


\textbf{Discussion}. It is worth noting that the goal of this paper is not to propose a high-performance user simulator. More sophisticated models, such as utility estimator~\cite{DBLP:conf/cikm/DaiHLXT0H0020} and UBS4RL~\cite{10.1145/3511469}, can be applied here. We leverage the user simulator as the imputation model to generate pseudo-click labels for an arbitrary ranked list. In the next section, we will leverage the user simulator in a doubly robust way to obtain unbiased ranking models.


\subsection{Doubly Robust Learning}
User simulators have the benefit of data imputation~\cite{DBLP:conf/icml/Hernandez-LobatoHG14a}. However, directly optimizing the ranking model through the user simulator could be highly biased due to the discrepancy between pseudo labels and actual clicks, \ie the pseudo clicks cannot be as accurate as real clicks. To address this problem,  we resort to the doubly robust approach.
Unfortunately, the treatment, \ie document examination, is not directly observable in click data. It is because when a document is not clicked, we cannot determine whether the user chose not to click or the user did not examine it~\cite{oosterhuis2022doubly}. 
To overcome this bottleneck, we propose to take the \textit{observation of ranked lists} as the treatment. After that, we incorporate inverse propensity weighting with pseudo-clicked labels in a doubly robust way, where the user simulator essentially functions as the imputation model.

Before we describe our doubly robust estimator for unbiased learning to rank, we introduce two concepts: prediction error and imputation error. We refer to the local ranking loss $\ell_{naive}$ in Eq~\ref{eq:loss_naive} with actual clicks as \textit{prediction error}. Then the \textit{imputed error} $\ell_{IMP}$, \ie the estimated value of the prediction error, is defined as: 
\begin{equation}
    \label{eq:imputed_error}
    \ell_{IMP} (f, q|\pi_o) = \sum_{d \in \pi_o, \hat{c}_{d} =1}  \Delta(\hat{c}_{d} | \pi_o),
\end{equation}
where $\hat{c}_{d}$ is the imputed click generated from the user simulator for each document $d$ in the observed ranked list $\pi_o$ under query $q$.

Let $\Pi_q$ be the set of document permutations, \ie all possible ranked lists. We simplify $\Pi_q$ as $\Pi$ when there is no ambiguity for the issued query $q$. 
We propose our doubly robust (DR) estimator by combining the imputed errors for all ranked lists, and the IPW-based prediction error for observed ranked lists. The loss function of the DR estimator given query $q$ is defined as:
\begin{equation}
\small{
\label{eq: dr_loss}
\begin{aligned}
\ell_{DR}(f, q|\pi_o) &= \frac{1}{|\Pi|} \sum_{\pi \in \Pi} \Bigg\{ \hat{\ell}_{IMP} (f, q|\pi) \\
    & \quad + o_{\pi} \bigg[ \ell_{IPW}(f,q|\pi) - \hat{\ell}_{IPW}(f,q|\pi)\bigg] \Bigg\} \\
    &= \frac{1}{|\Pi|} \sum_{\pi \in \Pi} \Bigg\{ \Bigg. \sum_{d \in \pi, \hat{c}_{d} =1}  \Delta(\hat{c}_{d}|\pi) \\
    & \quad + o_{\pi}\bigg[ \sum_{d \in \pi, c_{d}=1} \frac{\Delta (c_{d}|\pi)}{\hat{P}(e_{d} = 1)} - \sum_{d \in \pi, \hat{c}_{d}=1} \frac{\Delta (\hat{c}_{d}|\pi)}{\hat{P}(e_{d} = 1)} \bigg] \Bigg. \Bigg\},
\end{aligned}
}
\end{equation} %
where $\hat{\ell}_{IPW}(f,q|\pi)= \sum_{d \in \pi, \hat{c}_{d} =1}  \frac{\Delta(\hat{c}_{d} | \pi)}{\hat{P}(e_{d} = 1)} $ is the propensity score weighted imputed error, and $o_{\pi} =\mathbbm{1}\{\pi=\pi_o\}$ indicates whether $\pi$ is an observed ranked list. 
A nice property of the DR estimator is: if either the imputed error of any unobserved ranked list or the estimated propensities of any observed ranked list is accurate, the DR estimator is then unbiased~\cite{DBLP:conf/sigir/GuoZLYCWCY021,DBLP:conf/cikm/ZouHCWCCYGY22}, which is recognized as \textit{double robustness}.

\textbf{Discussion}.
Careful readers may notice that our method is similar in spirit to result randomization.
Ideally, if the pseudo-click labels generated from the user simulator are accurate for any unobserved ranked lists, learning from user simulators would be equivalent to learning from online result randomization, which is unbiased in principle~\cite{DBLP:conf/wsdm/JoachimsSS17, DBLP:conf/sigir/WangBMN16}. 
However, online result randomization is often impractical since it can negatively affect the user experience. Our method overcomes the aforementioned limitations by leveraging the user simulator in a doubly robust way, and simultaneously obtains unbiased ranking models.

\subsection{Bias and Variance of DR Estimator}

In this section, we derive the bias and variance of MULTR, and prove its double robustness. 

\begin{theorem} 
    \label{th: dr_bias}
    Let $\delta_{\pi, d} = c_{d} \Delta(c_{d}|\pi) - \hat{c}_{d} \Delta(\hat{c}_{d}|\pi)$ be the error deviation, and $\rho_{q, {d}} = \frac{P(e_{d}=1) - \hat{P}(e_{d}=1)}{\hat{P}(e_{d}=1)} $ be the propensity deviation. The bias of the doubly robust (DR) estimator is 
    \begin{equation}
        \begin{aligned}
            Bias[\ell_{DR} (f, q|\pi_o)] = \left| \frac{1}{|\Pi|} \sum_{\pi \in \Pi} \sum_{d \in \pi}  \rho_{\pi, d} \delta_{\pi, d} \right| .
        \end{aligned}
    \end{equation}
\end{theorem}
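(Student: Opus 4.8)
The plan is to compute the expectation of $\ell_{DR}(f,q|\pi_o)$ over the randomness in the observation indicators $\{o_\pi\}_{\pi \in \Pi}$ and compare it to the ideal loss $\ell_{ideal}$ (equivalently, the expected prediction error over all permutations), then take the absolute value of the difference. The only stochastic quantity in Eq.~\ref{eq: dr_loss} is $o_\pi = \mathbbm{1}\{\pi = \pi_o\}$; writing $\hat{P}(\pi\text{ observed})$ as the propensity of observing list $\pi$, we have $\mathbb{E}[o_\pi] = \hat{P}(\pi)$ when the estimated propensities are used for weighting, and the true rate is $P(\pi)$. First I would expand $\ell_{DR}$ term by term: the imputed-error term $\frac{1}{|\Pi|}\sum_\pi \hat{\ell}_{IMP}(f,q|\pi)$ is deterministic, and the correction term contributes $\frac{1}{|\Pi|}\sum_\pi \mathbb{E}[o_\pi]\big(\ell_{IPW}(f,q|\pi) - \hat{\ell}_{IPW}(f,q|\pi)\big)$ in expectation.

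Next I would substitute the definitions $\ell_{IPW}(f,q|\pi) = \sum_{d\in\pi} \frac{c_d \Delta(c_d|\pi)}{\hat P(e_d=1)}$ and $\hat\ell_{IPW}(f,q|\pi) = \sum_{d\in\pi} \frac{\hat c_d \Delta(\hat c_d|\pi)}{\hat P(e_d=1)}$ (using the clicked-only sums written as indicator-weighted full sums), so that $\ell_{IPW}(f,q|\pi) - \hat\ell_{IPW}(f,q|\pi) = \sum_{d\in\pi} \frac{\delta_{\pi,d}}{\hat P(e_d=1)}$ with $\delta_{\pi,d}$ as defined in the statement. Likewise $\hat\ell_{IMP}(f,q|\pi) = \sum_{d\in\pi}\hat c_d\Delta(\hat c_d|\pi)$. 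The target ideal loss, expressed over all permutations, is $\frac{1}{|\Pi|}\sum_\pi \sum_{d\in\pi} c_d\Delta(c_d|\pi)$; the key arithmetic identity to exploit is that each observed list is drawn with true probability, so the "honest" quantity we are comparing against is $\frac{1}{|\Pi|}\sum_\pi\sum_{d\in\pi}\big(\hat c_d\Delta(\hat c_d|\pi) + P(e_d=1)\cdot\frac{\delta_{\pi,d}}{\hat P(e_d=1)}\big)$, which telescopes: $\hat c_d\Delta(\hat c_d|\pi) + \frac{P(e_d=1)}{\hat P(e_d=1)}\delta_{\pi,d}$. Then $\mathbb{E}[\ell_{DR}] - \ell_{ideal}$ reduces, after cancelling the imputed terms and the "correct-propensity" part of the $\delta$ terms, to $\frac{1}{|\Pi|}\sum_\pi\sum_{d\in\pi}\frac{P(e_d=1) - \hat P(e_d=1)}{\hat P(e_d=1)}\delta_{\pi,d} = \frac{1}{|\Pi|}\sum_\pi\sum_{d\in\pi}\rho_{\pi,d}\,\delta_{\pi,d}$. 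Taking absolute values gives the claimed formula.

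The main obstacle I anticipate is bookkeeping around what "observed" means and which propensity governs $o_\pi$: the paper treats the observation of a ranked list as the treatment, so I need to be careful that $o_\pi$ is a Bernoulli variable whose mean under the data-generating process is the true observation propensity while the estimator divides by the estimated one, and that the imputation/IPW decomposition is aligned list-by-list (the $\hat c_d$ in $\hat\ell_{IMP}$ and in $\hat\ell_{IPW}$ must be the same pseudo-clicks). A secondary subtlety is the double-robustness remark that should fall out for free: if $\rho_{\pi,d} = 0$ for all $(\pi,d)$ (propensities exact) the bias vanishes, and if $\delta_{\pi,d} = 0$ for all $(\pi,d)$ (imputation exact, i.e. $\hat c_d = c_d$ with matching loss) the bias also vanishes — I would note this immediately after deriving the formula to connect with the double-robustness claim made earlier. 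The rest is routine linearity-of-expectation and substitution, so I would not belabor it.
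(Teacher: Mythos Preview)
Your proposal is correct and follows essentially the same route as the paper: compute $\mathbb{E}_{\mathcal{O}}[\ell_{DR}]$ by linearity, express the ideal loss as an average over all permutations $\frac{1}{|\Pi|}\sum_{\pi}\sum_{d\in\pi} c_d\Delta(c_d|\pi)$, subtract, and collect the $\rho_{\pi,d}\delta_{\pi,d}$ terms. The paper organizes the arithmetic by first isolating the (imputation $-$ ideal) piece and the $o_\pi$-weighted IPW correction piece and then recombining, while you telescope in a single pass, but the substance is identical (your early sentence ``$\mathbb{E}[o_\pi]=\hat P(\pi)$'' is a slip---it is the true propensity---but you handle this correctly in the actual computation and explicitly flag it as the bookkeeping hazard).
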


\begin{proof} According to the definition of bias,
\begin{equation}
\small{
    \begin{aligned}
         & Bias[\ell_{DR} (f, q|\pi_o)] \\
        =& \Bigg | \mathbb{E}_{\mathcal{O}} \left[ \ell_{DR} (f, q|\pi_o) \right ] -  \ell_{ideal}(f,q|\pi_o) \Bigg | \\
        =& \Bigg | \mathbb{E}_{\mathcal{O}} \left[ \ell_{DR} (f, q|\pi_o) \right ] -  \sum_{d \in \pi, r_{d}=1} \Delta(c_{d}|\pi) \Bigg | \\
        =& \Bigg | \mathbb{E}_{\mathcal{O}} \left[ \ell_{DR} (f, q|\pi_o) \right ] -  \frac{1}{|\Pi|}\sum_{\pi \in \Pi}\sum_{d \in \pi, c_{d}=1} \Delta(c_{d}|\pi) \Bigg | \\
        =& \Bigg| \Bigg. \frac{1}{|\Pi|} \sum_{\pi \in \Pi}   \Bigg.  \mathbb{E}_{\mathcal{O}} \bigg\{  \big[ \sum_{d \in \pi, \hat{c}_{d} =1}  \Delta(\hat{c}_{d}|\pi) - \sum_{d \in \pi, c_{d}=1} \Delta(c_{d}|\pi_o) \big] \\
        & \quad + o_{\pi} \big[ \sum_{d \in \pi, c_{d}=1} \frac{\Delta (c_{d}|\pi)}{\hat{P}(e_{d} = 1)} - \sum_{d \in \pi, \hat{c}_{d}=1} \frac{\Delta (\hat{c}_{d}|\pi)}{\hat{P}(e_{d} = 1)} \big] \bigg\} \Bigg. \Bigg| \Bigg.     
    \end{aligned}
}
\end{equation} %
In the third line, we expand $\ell_{ideal}$ with online result randomization. 
Based on the examination hypothesis in Eq~\ref{eq:assumption}, the terms in the first row can be derived as, 
\begin{equation}
    \begin{aligned}
        & \left |  \frac{1}{|\Pi|} \sum_{\pi \in \Pi} \mathbb{E}_{\mathcal{O}} \left[ \sum_{d \in \pi, \hat{c}_{d} =1}  \Delta(\hat{c}_{d}|\pi) - \sum_{d \in \pi, c_{d}=1} \Delta(c_{d}|\pi_o) \right] \right | \\
        =&  \left| \frac{1}{|\Pi|} \sum_{\pi \in \Pi}  
         \sum_{d \in \pi}  \left[ \hat{c}_d \cdot \Delta(\hat{c}_{d}|\pi) - c_d \cdot \Delta(c_{d}|\pi) \right] \right|. 
    \end{aligned}
\end{equation}
Then the second term can be derived as, 
\begin{equation}
\small{
\begin{aligned}
    & \left| \frac{1}{|\Pi|} \sum_{\pi \in \Pi} \mathbb{E}_{\mathcal{O}}  \left \{  o_{\pi} \left[ \sum_{d \in \pi, c_{d}=1} \frac{\Delta (c_{d}|\pi)}{\hat{P}(e_{d} = 1)} - \sum_{d \in \pi, \hat{c}_{d}=1} \frac{\Delta (\hat{c}_{d}|\pi)}{\hat{P}(e_{d} = 1)} \right] \right \} \right| \\
    =& \left| \frac{1}{|\Pi|} \sum_{\pi \in \Pi} \mathbb{E}_{\mathcal{O}}  \left \{  o_{\pi} \left[ \sum_{d \in \pi, c_{d}=1} \frac{\Delta (c_{d}|\pi)}{\hat{P}(e_{d} = 1)} \right ] \right \} - \mathbb{E}_{\mathcal{O}}  \left \{ o_{\pi} \left [ \sum_{d \in \pi, \hat{c}_{d}=1} \frac{\Delta (\hat{c}_{d}|\pi)}{\hat{P}(e_{d} = 1)} \right] \right \} \right| \\
    =& \left| \frac{1}{|\Pi|} \sum_{\pi \in \Pi} 
    \sum_{d \in \pi} \frac{P(e_{d} = 1)}{\hat{P}(e_{d} = 1)} \left [ c_d \cdot \Delta (c_{d}|\pi) - \hat{c}_d \cdot \Delta (\hat{c}_{d}|\pi) \right ] \right| .
\end{aligned}}
\end{equation}
Finally, combing the two parts above, we derive the bias of the proposed DR estimator 
\begin{equation}
\begin{aligned}
    Bias[\ell_{DR} (f, q|\pi_o)]  
    = \left| \frac{1}{|\Pi|} \sum_{\pi \in \Pi} \sum_{d \in \pi }  \rho_{q, d} \delta_{q, d} \right| .
\end{aligned}
\end{equation}
\end{proof} %
As shown in Theorem ~\ref{th: dr_bias}, when the imputed error is accurate $(\delta_{q, d} = 0)$ or the propensity estimation is accurate $(\rho_{q,d} = 0)$, the bias of DR estimator $Bias \left[ \ell_{DR} (f, q | \pi_o)  \right] = 0$. 
Next, we derive the variance of the DR estimator.

\begin{theorem}
\label{th:dr_vairance}
    The variance of the DR estimator is 

\begin{equation}
\small{
\begin{aligned}
    \mathbb{V}_{\mathcal{O}} [\ell_{DR} (f, q|\pi_o)]  = \frac{1}{|\Pi|^2} \sum_{\pi \in \Pi} \sum_{d \in \pi} \bigg\{ \Bigg. \frac{P(e_{d} = 1) \left( 1 - P(e_{d} = 1) \right)}{\hat{P}(e_{d} = 1)^2} \cdot   \delta^2_{q, d} \bigg\} \Bigg.
\end{aligned}}
\end{equation}
\end{theorem}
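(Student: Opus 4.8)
The plan is to mirror the variance derivation of the IPW estimator (Theorem~\ref{th:ips_var}), now applied to the DR loss of Eq.~\ref{eq: dr_loss}. First I would observe that, among the three families of terms in $\ell_{DR}$, the pure imputation term $\hat{\ell}_{IMP}(f,q|\pi) = \sum_{d\in\pi,\hat{c}_{d}=1}\Delta(\hat{c}_{d}|\pi)$ is entirely deterministic once the user simulator $g$ is fixed: the pseudo-clicks $\hat{c}_{d}$ do not depend on the observation process $\mathcal{O}$, so this term contributes nothing to $\mathbb{V}_{\mathcal{O}}$. Hence only the correction term $o_{\pi}[\ell_{IPW}(f,q|\pi)-\hat{\ell}_{IPW}(f,q|\pi)]$ carries variance, and pulling the constant $1/|\Pi|$ out of the variance operator turns it into the prefactor $1/|\Pi|^2$ and reduces the problem to computing the variance of the per-list correction terms.

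Next I would rewrite the per-list correction term as a single sum over documents: since $\ell_{IPW}$ and $\hat{\ell}_{IPW}$ share the same propensities $\hat{P}(e_{d}=1)$,
$$\ell_{IPW}(f,q|\pi)-\hat{\ell}_{IPW}(f,q|\pi) = \sum_{d\in\pi}\frac{c_{d}\Delta(c_{d}|\pi)-\hat{c}_{d}\Delta(\hat{c}_{d}|\pi)}{\hat{P}(e_{d}=1)} = \sum_{d\in\pi}\frac{\delta_{\pi,d}}{\hat{P}(e_{d}=1)},$$
so the error deviation $\delta_{\pi,d}$ plays exactly the role that $\Delta(c_{d}|\pi)$ played in Eq.~\ref{eq:ips_var}. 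Then, following that derivation verbatim, I would invoke the Examination Hypothesis (Eq.~\ref{eq:assumption}): $c_{d}$ is governed by the Bernoulli examination variable $e_{d}$ while $\hat{c}_{d}$ is deterministic, so $\delta_{\pi,d}$ is an affine function of $e_{d}$. Independence of examinations across documents lets the variance of the sum split over $d$, and $\mathbb{V}_{\mathcal{O}}[e_{d}] = P(e_{d}=1)(1-P(e_{d}=1))$ yields the per-document factor $\frac{P(e_{d}=1)(1-P(e_{d}=1))}{\hat{P}(e_{d}=1)^2}\,\delta^2_{q,d}$; summing over $d\in\pi$ and $\pi\in\Pi$ and dividing by $|\Pi|^2$ gives the claimed expression.

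The step I expect to be the main obstacle is handling the observation indicators $o_{\pi}$ cleanly: only a single ranked list $\pi_o$ is actually observed, yet the final bound sums over all $\pi\in\Pi$. To bridge this I would either invoke the per-list observation convention implicit in the DR construction (treating each $o_{\pi}$ as an independent draw governing whether list $\pi$ is logged), or argue that the deterministic imputation terms kill all cross-list covariances so that the per-list variances simply add. A secondary subtlety, inherited from Theorem~\ref{th:ips_var}, is that the affine coefficient of $e_{d}$ gets written as $\delta^2_{q,d}$ itself; this is valid only on the support that actually contributes (relevant documents that are clicked or imputed), so I would restrict the sums accordingly, exactly as the IPW proof retains $\Delta(c_{d}|\pi)^2$ over $c_{d}=1$.
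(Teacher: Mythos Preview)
Your proposal is correct and follows essentially the same route as the paper: drop the deterministic imputation term, pull out $1/|\Pi|^2$, rewrite the correction as $\sum_{d\in\pi}\delta_{\pi,d}/\hat{P}(e_d=1)$, and then reuse the Bernoulli-variance computation from Theorem~\ref{th:ips_var}. The only cosmetic difference is that the paper first factors the randomness through $\mathbb{V}_{\mathcal{O}}[o_{\pi}]$ (treating each $o_\pi$ as an independent draw, exactly the ``per-list observation convention'' you anticipated as the main obstacle) and then appeals to the IPW derivation to convert that into the $P(e_d=1)(1-P(e_d=1))$ factor, whereas you go to $e_d$ directly; both arrive at the same expression by the same mechanism.
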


\begin{proof}
The variance of $\ell_{DR} (f, q|\pi_o)$ with respect to the observation indicator $\mathcal{O}$ is
\begin{equation}
\small{
\begin{aligned}
    & \mathbb{V}_{\mathcal{O}}[\ell_{DR} (f, q|\pi_o)] \\
    =& \frac{1} {|\Pi|^2} \sum_{\pi \in \Pi} \mathbb{V}_{\mathcal{O}}  \Bigg\{ \hat{\ell}_{IMP} (f, q|\pi) + \mathcal{O} \bigg[ \ell_{IPW}(f,q|\pi) - \hat{\ell}_{IPW}(f,q|\pi)\bigg] \Bigg\} \\
    =& \frac{1} {|\Pi|^2} \sum_{\pi \in \Pi} \mathbb{V}_{\mathcal{O}} [o_{\pi}] \cdot \bigg[ \ell_{IPW}(f,q|\pi) - \hat{\ell}_{IPW}(f,q|\pi)\bigg]^2 \\
    =& \frac{1} {|\Pi|^2} \sum_{\pi \in \Pi}  \left\{ \mathbb{V}_{\mathcal{O}} [o_{\pi}] \cdot \left [ \sum_{d \in \pi} \left( \frac{c_{d} \cdot \Delta (c_{d}|\pi)}{\hat{P}(e_{d} = 1)} - \frac{\hat{c}_{d} \cdot  \Delta(\hat{c}_{d}|\pi)}{\hat{P}(e_{d} = 1)} \right) \right ]^2 \right\} \\
    =& \frac{1} {|\Pi|^2} \sum_{\pi \in \Pi} \left\{ \frac{P(e_{d} = 1) \left( 1 - P(e_{d} = 1) \right)}{\hat{P}(e_{d} = 1)^2} \cdot   \delta^2_{q, d} \right\},
\end{aligned}
}
\end{equation}
where the third step utilizes the similar derivation in Eq~\ref{eq:ips_var}. 


\end{proof}
In Theorem~\ref{th:dr_vairance}, we demonstrate that the variance of the DR estimator depends on the estimated propensity, \ie $\hat{P}(e_{d} = 1)$, which may lead to a high variance problem. Recall the variance of the IPW estimator in Theorem~\ref{th:ips_var}; it is worth noting that our DR estimator can still reduce the variance of the IPW estimator, if any given observed ranked list satisfies $\left[ c_d \Delta (c_{d}|\pi) - \hat{c}_d \Delta (\hat{c}_{d} | \pi) \right]^2 \leq [c_d  \Delta (c_{d} |\pi)]^2$. Then, the variance of the DR estimator is smaller than that of the IPW-based estimator.

\subsection{Learning Framework}
We summarize the learning framework in Algorithm~\ref{alg:mutlr}.
It is worth noting that the proposed MULTR is a plug-in model, which can be seamlessly integrated into any IPW-based unbiased learning to rank framework without result randomization, such as DLA~\cite{DBLP:conf/sigir/AiBLGC18} and  REM~\cite{DBLP:conf/wsdm/WangGBMN18}. 

\begin{algorithm}
\caption{Model-based Unbiased Learning to Rank}\label{alg:mutlr}
\SetKwInOut{Input}{Input}
\SetKwInOut{Output}{Output}

\Input{query set $\mathcal{Q}$, an IPW-based framework $F$}
\Output{ranking model $f(\theta)$ and  user simulator $g(\phi)$}

Initialize the parameters $\theta, \phi$ \;
\For{number of steps for \textbf{training the user simulator}} {
    Sample a batch of queries $Q$ from $\mathcal{Q}$, with observed ranked lists $\Pi_o$ \;
    Update $\phi$, according to Eq. \ref{eq: user_simulator_loss} \;
}

\For{number of steps for \textbf{training the unbiased ranking model}} {
    Sample a batch of queries $Q$ from $\mathcal{Q}$, with observed ranked lists $\Pi_o$ \;
    Obtain unobserved ranked lists $\Pi_u$ by result randomization over $\Pi_o$ \;
    Generate pseudo-click labels $\hat{c}_d$ for all ranked list $\Pi = \Pi_o \cup \Pi_u$ by sampling on the click probabilities, according to Eq.~\ref{eq:click_prob}\footnotemark \ \;
    Update $\theta$, according to Eq.~\ref{eq: dr_loss}
}

\end{algorithm}
\footnotetext{Due to the large sample space of all possible ranked lists (\ie $N$ documents have $N!$\ permutations), we decrease the sample size in practice.}

%


\section{Experimental Setup}
To analyze the effectiveness of MULTR, we conduct two types of experiments. The first is simulation experiments based on two of the largest public learning-to-rank datasets. The second is an experiment based on the actual rank lists and user clicks collected from a commercial web search engine. 

\subsection{Simulation Experiment Setup}
To fully investigate the spectrum of propensity estimation and performance of MULTR, we conduct experiments on two of the largest publicly available datasets: 
\begin{itemize}[leftmargin=*]
    \item \textbf{Yahoo! LETOR}\footnote{\url{https://webscope.sandbox.yahoo.com/}} comes from the Learn to Rank Challenge version 2.0 (Set 1), and it is one of the largest benchmark datasets widely used in unbiased learning to rank~\cite{DBLP:conf/sigir/AiBLGC18, DBLP:conf/www/HuWPL19}. It consists of 29,921 queries and 710K documents. Each query-document pair is represented by a 700-D feature vector, and annotated with 5-level relevance labels~\cite{DBLP:journals/jmlr/ChapelleC11}.
    \item \textbf{Istella-S}\footnote{\url{http://quickrank.isti.cnr.it/istella-dataset/}} contains 33K queries and 3,408K documents (roughly 103 documents per query) sampled from a commercial Italian search engine. Each query-document pair is represented by 220 features and annotated with 5-level relevance judgments~\cite{DBLP:conf/sigir/LuccheseNOPST16}.
\end{itemize}
We follow the predefined data split of training, validation and testing of all datasets. The Yahoo!\ set splits the queries arbitrarily and uses 19,944 for training, 2,994 for validation and 6,983 for testing. The Istella-S dataset has been divided into train, validation and test sets according to a $60\% - 20\% - 20\%$ scheme.
    
\paragraph{Click Simulation}
We generate click data on both datasets with a two-step process as in \citet{DBLP:conf/wsdm/JoachimsSS17} and \citet{DBLP:conf/sigir/AiBLGC18}. First, we train a Rank SVM model~\cite{DBLP:conf/kdd/Joachims06} using $1\%$ of the training data with  real relevance labels to generate the initial ranked list $\pi_q$ for each query $q$. Then, we simulate the user browsing process and sample clicks from the initial ranked list by utilizing the simulation model. The simulation model generates clicks based on the examination hypothesis in Equation~\ref{eq:assumption}.
Following the methodology proposed by Chapelle \etal~\cite{DBLP:conf/cikm/ChapelleMZG09}, the relevance probability is  set to be
\begin{equation}
    \Pr (r_{d_i} = 1 | \pi_q) = \epsilon + (1 - \epsilon) \frac{2 ^ y - 1}{2^{y_{\mathrm{max}}} - 1},
\end{equation}
where $y \in [0, y_{\mathrm{max}}]$ is the relevance label of the document $d_i$, and $y_{\mathrm{max}}$ is the maximum value of $y$, which is 4 on both datasets. $\epsilon$~is the noise level, which models click noise such that irrelevant documents (\ie $y = 0$) have a non-zero probability to be perceived as relevant and clicked. We fix $\epsilon = 0.1$ as the default setting. 

We use the cascade model~\cite{DBLP:conf/wsdm/CraswellZTR08} to generate the examination probability. It is a context-aware click model, and a user is modeled as searching and clicking documents from top to bottom, and deciding whether to click each result before moving to the next; users stop examining a search result page after the first click. The cascade model is defined as, 
\begin{equation}
    \begin{aligned}
        \Pr(e_{d_1} = 1) =& 1 \\
        \Pr(e_{d_j} | e_{d_{j-1}} = e, c_{d_{j-1}} = c) =& e \cdot (1-c) 
    \end{aligned}
\end{equation} %
where $e$ and $c$ are binary variables that indicate whether the document $d$ is examined and clicked. 

\paragraph{Baselines}
To demonstrate the effectiveness of our proposed method, we compare with baseline methods, including IPW-based methods and bias modeling methods, which are widely used in ULTR problems.
\begin{itemize}[leftmargin=*]
  \item \textbf{DLA}: The Dual Learning Algorithm~\cite{DBLP:conf/sigir/AiBLGC18} treats the problem of unbiased learning to rank and unbiased propensity estimation as a dual problem, such that they can be optimized simultaneously.  
  \item \textbf{REM}: The Regression EM model~\cite{DBLP:conf/wsdm/WangGBMN18} uses an EM framework to estimate the propensity scores and ranking scores. 
  \item \textbf{PairD}: The Pairwise Debiasing (PairD) Model~\cite{DBLP:conf/www/HuWPL19} uses inverse propensity weighting for pairwise learning to rank. 
  \item \textbf{PAL}: The Position-bias Aware Learning framework~\cite{DBLP:conf/recsys/GuoYLTZ19} is a bias modeling method. It introduces a position model to explicitly represent position bias, and jointly model the bias and relevance. 
  \item \textbf{Oracle}: This model utilizes human experts' annotated labels to train the ranking model and test its performance. Its performance can be considered as an upper bound for the ranking model.
  \item \textbf{Naive}: This model just uses the raw click data to train the ranking model, without any correction. Its performance can be considered as a lower bound for the ranking model.
\end{itemize}
\noindent
In addition, we also consider two variants of MULTR as follows,
\begin{itemize}[leftmargin=*]
    \item \textbf{Rand-MULTR}: We replace the well-trained user simulator with the model with random initialization, \ie we skip lines 2-5 in the Algorithm 1. We refer to this variant
as Rand-MULTR.
    \item \textbf{EIB-MULTR}: We alternate the doubly robust loss in Equation~\ref{eq: dr_loss} with the error imputation based loss~\cite{DBLP:conf/icml/Hernandez-LobatoHG14a}, defined as $\ell_{EIB} = \frac{1}{|\Pi|} \sum_{\pi \in \Pi} \ell_{naive}(f,q|\pi)$. It is called EIB since it uses the user simulator to compute an imputed error, \ie the estimated values of the ranking loss, for each unobserved ranked list, and then used it to estimate the true ranking loss for all the ranked lists. We refer this variant as EIB-MULTR. 
\end{itemize}

\paragraph{Experimental Protocols}
We implement MULTR
and use the baselines in ULTRA\footnote{\url{https://github.com/ULTR-Community/ULTRA_pytorch/}}~\cite{DBLP:conf/cikm/TranYA21} to conduct our experiments. In particular, MULTR is integrated with DLA, as we consider it to be
the best method to estimate propensity. 
For each query, only the top $N=10$ documents are assumed to be displayed to the users. For both datasets, all models are trained with synthetic clicks. Following the setting in~\citet{DBLP:conf/sigir/AiBLGC18}, the click sessions for training are generated on the fly. We fix the batch size to 256 and train each model for 10K steps. 
The user simulator is 
also trained with batch size of 256 and 10K steps; afterwards, we fix it to generate pseudo-click labels. We use the AdaGrad optimizer~\cite{DBLP:journals/jmlr/DuchiHS11} and tune learning rates from
0.01 to 0.05 for each unbiased learning-to-rank algorithm.

In our experiments, we train neural networks for our ranking functions. All reported results are produced using a model with three hidden layers with size $[512, 256, 128]$ respectively, with the ELU~\cite{DBLP:journals/corr/ClevertUH15} activation function and 0.1 dropout~\cite{DBLP:journals/jmlr/SrivastavaHKSS14}.  In terms of imputation model, the dimension of the hidden states is 64. We tune the $L_2$ regularization coefficient $\lambda, \mu$ in range of $\{1e^{-5}, 1e^{-4},\cdots, 1\}$, and the number of pseudo samples in $\{2, 4, 8, 16, 24, 32\}$ based on results in the validation set.

To evaluate all methods, we use the normalized Discounted Cumulative Gain (nDCG)~\cite{DBLP:journals/tois/JarvelinK02} and the Expected Reciprocal Rank (ERR)~\cite{DBLP:conf/cikm/ChapelleMZG09}. For both metrics, we report the results at rank 1, 3, 5, and 10 to show the performance of models at different positions. Following Ai \etal~\cite{DBLP:conf/sigir/AiBLGC18}, statistical differences are computed based on the Fisher randomization test~\cite{DBLP:conf/cikm/SmuckerAC07} with $p \leq 0.05$.

\subsection{Real Click Experiment Setup}
In order to show the effectiveness of MULTR in practice, we also conduct experiments on click data \textbf{Tiangong-ULTR}\footnote{\url{http://www.thuir.cn/data-tiangong-ultr/}} collected from a commercial web search engine \cite{DBLP:conf/cikm/AiMLC18, DBLP:conf/sigir/AiBLGC18}. It contains 3,449 queries written by real search engine users and the corresponding top 10 results are sampled from a two-week search log collected on Sogou\footnote{\url{https://www.sogou.com/}}. The raw HTML documents are downloaded,
and then the rank lists that cannot be crawled are removed. After cleaning, there are 333,813 documents, 71,106 ranked lists and 3,268,177 anonymous click sessions. 

\paragraph{Feature Extraction}
To train the learning-to-rank models, features are extracted based on the text of queries and documents. The ranking features are constructed based on the URL, title, content and whole text of the documents and queries~\cite{DBLP:conf/sigir/AiBLGC18}. In total, each query-document pair has 33 features.

\paragraph{Evaluation}
Tiangong-ULTR provides a test set with 100 queries written by real users. Each query has 100 candidate documents (retrieved by BM25) and each query-document pair has 5-level relevance annotation. We train the ranking model in MULTR and baselines on the training set with clicks following the same protocols in the simulation experiments, and we evaluate their performance on the test set with human annotations. Similar to the simulation experiments, we report nDCG and ERR for all models.

{\centering
\tabcolsep 0.02in
\begin{table*}[!hpt]
\centering
\caption{A comparison of the overall performance MULTR and competing methods on Yahoo! and Istella-S datasets. $``\ast"$  indicates a statistically significant improvement over the best baseline.}
\vspace{-0.3cm}
\label{tab:overall}
    \centering
    \begin{tabular}{ c c c c c  c c c c c  c c c c c  c c}
    \ChangeRT{0.8pt}
    \multirow{3}{*}{\shortstack{Methods}} & \multicolumn{8}{c}{ Yahoo! LETOR } & \multicolumn{8}{c}{ Istella-S } 
    \\ \cmidrule(lr){2-9} \cmidrule(lr){10-17}
    & \multicolumn{4}{c}{ NDCG@K } & \multicolumn{4}{c}{ ERR@K } & \multicolumn{4}{c}{ NDCG@K } & \multicolumn{4}{c}{ ERR@K }  \\
    \cmidrule(lr){2-5}  
    \cmidrule(lr){6-9}      
    \cmidrule(lr){10-13}  
    \cmidrule(lr){14-17} 
    & K = 1 & K = 3 & K = 5 & K = 10 & K = 1 & K = 3 & K = 5 & K = 10  & K = 1 & K = 3 & K = 5 & K = 10 & K = 1 & K = 3 & K = 5 & K = 10 \\
    \ChangeRT{0.5pt} 
    Oracle & 0.689 & 0.691 & 0.714 & 0.761 & 0.350 & 0.428 & 0.449 & 0.465 
           & 0.671 & 0.641 & 0.667 & 0.729 & 0.598 & 0.709 & 0.725 & 0.731 \\
    \ChangeRT{0.5pt}
    \bf{MULTR}  & $\textbf{0.681}^\ast$ & $\textbf{0.682}^\ast$ & $\textbf{0.705}^\ast$ & $\textbf{0.751}^\ast$
           & $\textbf{0.348}^\ast$ & $\textbf{0.426}^\ast$ & $\textbf{0.447}^\ast$ & $\textbf{0.463}^\ast$ 
           & $\textbf{0.664}^\ast$ & $\textbf{0.627}^\ast$ & $\textbf{0.648}^\ast$ & $\textbf{0.704}^\ast$
           & $\textbf{0.594}^\ast$ & $\textbf{0.701}^\ast$ & $\textbf{0.717}^\ast$ & $\textbf{0.724}^\ast$ \\ 
    DLA    & 0.669 & 0.671 & 0.694 & 0.743 & 0.346 & 0.422 & 0.444 & 0.460 
           & 0.639 & 0.607 & 0.629 & 0.683 & 0.572 & 0.683 & 0.700 & 0.707 \\
    REM    & 0.636 & 0.638 & 0.661 & 0.711 & 0.336 & 0.410 & 0.432 & 0.448 
           & 0.590 & 0.551 & 0.568 & 0.618 & 0.528 & 0.640 & 0.659 & 0.669 \\
    PairD  & 0.653 & 0.662 & 0.687 & 0.738 & 0.333 & 0.413 & 0.436 & 0.451 
           & 0.604 & 0.586 & 0.617 & 0.687 & 0.537 & 0.661 & 0.680 & 0.688 \\
    PAL    & 0.645 & 0.660 & 0.683 & 0.735 & 0.328 & 0.410 & 0.432 & 0.448 
           & 0.620 & 0.595 & 0.623 & 0.691 & 0.553 & 0.672 & 0.690 & 0.698 \\
    Naive  & 0.651 & 0.660 & 0.685 & 0.737 & 0.332 & 0.411 & 0.434 & 0.450 
           & 0.613 & 0.592 & 0.622 & 0.690 & 0.546 & 0.667 & 0.686 & 0.694 \\ \hline 
    Rand-MULTR  & 0.667 & 0.670 & 0.693 & 0.743 & 0.341 & 0.419 & 0.441 & 0.457 
               & 0.633 & 0.604 & 0.631 & 0.693 & 0.564 & 0.681 & 0.698 & 0.706 \\
    EIB-MULTR   & 0.674 & 0.678 & 0.700 & 0.749 & 0.346 & 0.423 & 0.445 & 0.460 
               & 0.653 & 0.611 & 0.631 & 0.684 & 0.585 & 0.690 & 0.707 & 0.714\\ 
    \ChangeRT{0.8pt}
    \end{tabular}
\end{table*}}

\section{Results and Analysis}

In this section, we discuss the results of our proposed model-based unbiased learning to rank method with existing approaches using both simulated and real-world experiments. In general, we expect the experimental results to answer the following research questions:
\begin{itemize}[leftmargin=*]
    \item \textbf{RQ1}: Can MULTR outperform state-of-the-art unbiased learning to rank methods?
    \item \textbf{RQ2}: What influence do variant designs have on MULTR?
    \item \textbf{RQ3}: How does the sample number of unobserved ranked lists influence the performance of MULTR?
    \item \textbf{RQ4}: How does MULTR perform on real click logs?
\end{itemize}

\subsection{Performance Comparison (RQ1)}
To answer RQ1, we compare MULTR with other unbiased learning-to-rank algorithms in the simulation experiments. Table~\ref{tab:overall} summarizes the performance of different unbiased learning to rank algorithms under various click situations. 
We find that:
\begin{itemize}[leftmargin=*]
    \item Our model-based unbiased learning to rank achieves the best performance among all the state-of-the-art methods in terms of nDCG and ERR, which indicates our model is robust when propensity cannot be accurately estimated. 
    \item Reducing variance can enhance performance on unbiased learning to rank. As MULTR shares the same method with DLA in estimating propensity, the improvements over DLA demonstrate the necessity of handling variance in IPW-based methods.
    \item The naive method may achieve better performance than some unbiased learning to rank methods when clicks are generated by a cascade model. One explanation is that documents displayed at a lower rank have a higher opportunity to be examined; thus they receive more clicks, which alleviates the bias in click data. This observation confirms that a mismatch between propensity estimation and real bias could lead to performance even worse than raw click data. 
    \item The oracle model with human annotation consistently achieves the best performance. It implies that there is still room to improve in unbiased learning to rank methods. 
\end{itemize}

\subsection{Ablation Study (RQ2)}
MULTR has specific design features, including a context-aware user simulator and a doubly robust loss function to train rankers. To demonstrate their effectiveness, we analyze their respective impacts on the model's performance via ablation study. The experimental results of its two variants on two datasets are summarized in Table~\ref{tab:overall}. We analyze their respective effects as follows.

(1) Rand-MULTR: When we replace the well-trained user simulator with one from random initialization, the performance of Rand-MULTR significantly degrades.  This observation verifies the necessity of an accurate user simulator for the doubly robust estimator learning process: when the user simulator has a high bias due to inaccurate click imputations, it will mislead the rankers.

(2) EIB-MULTR: When we use the same treatments for real clicks from observations and pseudo clicks from user simulators, the performance of EIB-MULTR suffers a significant decrease. This suggests, naturally, that there is a discrepancy between real clicks and pseudo clicks. The doubly robust approach effectively leverages pseudo clicks and improves the performance of rankers.

\subsection{Parameter Sensitivity Study (RQ3)}
As we have explained previously, when $N$ is the maximum number of documents that can be shown to users, there will be $N!$ candidate ranked lists, which could be large. Therefore, we decrease the number of samples for unobserved ranked lists in practice. To investigate its impact in MULTR, we vary the sample size for unobserved ranked lists in the range of $\{0, 2, 4, 8, 16, 24, 32\}$. Figure ~\ref{fig:sample_ratio} shows the NDCG@K and ERR@K for MULTR with respect to different sample sizes on both datasets.

\begin{figure*}[h!]
    \centering
    \includegraphics[width=6.8in]{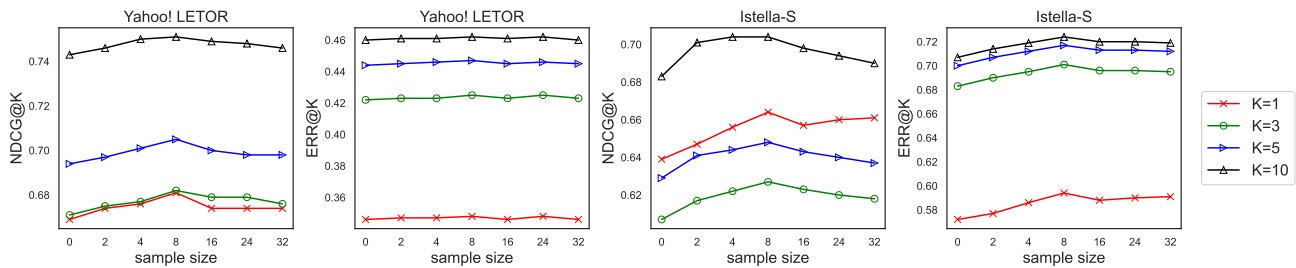}
    \caption{Study of effect of sample size for unobserved ranked lists on Yahoo! LETOR and Istella-S.}
    \label{fig:sample_ratio}
\end{figure*}

First, MULTR with sample number 0, \ie we merely sample from unobserved candidate rank lists, derives the worst performance. This observation shows that the well-trained user simulator enables the unobserved rank lists to provide rankers with useful information. Second, when the sampling number increases from 0 to 8, the performance at all levels increases. This shows that sampling more unobserved rank lists can bring more beneficial information to rankers.  Third, overly large sampling does not show a significant improvement, even though we should sample all unobserved rank lists in theory. We even observe that NDCG starts to decrease when the sample number increases from 8 to 32. One possible reason could be that the observed rank lists are typically sparse in practice, meaning that we cannot ensure that the rankers obtain sufficient information. For both datasets, the optimal sampling size is 8. In conclusion, setting the sampling size too conservatively or too aggressively may adversely affect the ranking performance, and too aggressively also brings in unnecessary computational cost.

\subsection{Real Data Study (RQ4)}
As we have demonstrated that the proposed MULTR works well in the simulation study, we would further investigate how it performs on real click logs recorded by the search engine. We compare MULTR with existing unbiased learning to rank algorithms on real data. The experimental results are summarized in Table~\ref{tab:real-exp}. 
From the results, we can see that MULTR outperforms state-of-the-art unbiased learning to rank algorithms. 
This shows that the user simulator in MULTR provides reliable pseudo-clicks for the ranker,which enhances the ranking performance.
In addition, propensity models in both MULTR and baseline methods are developed upon position-based examination assumption, which makes accurate propensity estimation infeasible on real data. Therefore, they are more likely to suffer from the high variance problem. The superior performance of MULTR over baseline methods verifies the necessity of reducing the variance via doubly-robust learning.

\begin{table}[!hpt]
    \caption{Comparison of MULTR and competing methods on Tiangong-ULTR. }
    \vspace{-0.3cm}
    \tabcolsep 0.04in
    \small
    \centering
    \begin{tabular}{ c c c c c  c c c c}
    \ChangeRT{0.8pt}
    \multirow{2}{*}{\shortstack{Methods}}  & \multicolumn{4}{c}{ NDCG@K } & \multicolumn{4}{c}{ ERR@K } \\
    \cmidrule(lr){2-5} \cmidrule(lr){6-9}
    & K = 1 & K = 3 & K = 5 & K = 10 & K = 1 & K = 3 & K = 5 & K = 10 \\
    \ChangeRT{0.5pt} 
    \bf{MULTR}   & $\textbf{0.483}$ & $\textbf{0.465}$ & $\textbf{0.469}$ & $\textbf{0.476}$
            & $\textbf{0.448}$ & $\textbf{0.578}$ & $\textbf{0.606}$ & $\textbf{0.618}$ \\
    DLA     & 0.472 & 0.443 & 0.425 & 0.443 & 0.443 & 0.571 & 0.594 & 0.610 \\
    REM     & 0.450 & 0.455 & 0.457 & 0.476 & 0.422 & 0.566 & 0.594 & 0.606 \\
    PairD   & 0.388 & 0.371 & 0.369 & 0.383 & 0.364 & 0.494 & 0.522 & 0.542 \\
    PAL     & 0.365 & 0.384 & 0.395 & 0.413 & 0.342 & 0.483 & 0.516 & 0.533 \\
    Naive   & 0.412 & 0.413 & 0.408 & 0.411 & 0.386 & 0.528 & 0.556 & 0.570 \\
    \ChangeRT{0.8pt}
    \end{tabular}
    \label{tab:real-exp}
\end{table}

\vspace{-0.5cm}
\section{Conclusion}

In this work, we propose MULTR, a model-based unbiased learning to rank framework. 
We first design a context-aware user simulator to produce labels for unobserved ranked lists as supervision signals, which addressed data sparsity for long-tail queries. However, the natural discrepancy between pseudo and actual clicks could mislead the ranking models, as pseudo data cannot be as accurate as real users.  
Therefore, we resort to DR approaches. In particular, we take the observation of the ranked lists as the treatment, which addresses the treatment unobservation when naively applying existing DR methods. Afterward, we take the pseudo-click data as data imputation, and propose a doubly-robust learning algorithm to obtain unbiased ranking models. 
Theoretical analysis reveals that our method is more robust than existing methods. Extensive experiments on benchmark datasets, including simulated datasets and real click logs, demonstrate that the proposed model-based method consistently outperforms state-of-the-art methods. 

\bibliographystyle{ACM-Reference-Format}
\balance
\bibliography{sample-base}

\end{document}